    \def\le{\leqslant}
    \def\ge{\geqslant}
    \def\BB{\mathbf B}
    \def\RR{\mathbf R}
    \def\CCC{\mathcal C}
    \def\LLL{\mathcal L}
    \def\OOO{\mathcal O}
    \def\VS{{V \setminus S}}
    \def\poly{\textsf{poly}}
    \def\EE{\mathop{\mathbf E}}
\begin{document}

                                   \title
          {Semirandom Planted Clique via 1-norm Isometry Property}
                                      
                                   \author
                  {Venkatesan Guruswami \and Hsin-Po Wang} 
                                      
                                 \institute
     {University of California, Berkeley\and National Taiwan University}
                                      
                                 \maketitle
                                      
\begin{abstract}
    We give a polynomial-time algorithm that finds a planted clique of size
    $k \ge \sqrt{n \log n}$ in the semirandom model, improving the
    state-of-the-art $\sqrt{n} (\log n)^2$ bound.  This \emph{semirandom
    planted clique problem} concerns finding the planted subset $S$ of $k$
    vertices of a graph $G$ on $V$, where the induced subgraph $G[S]$ is
    complete, the cut edges in $G[S; V \setminus S]$ are random, and the
    remaining edges in $G[V \setminus S]$ are adversarial.

    \smallskip

    An elegant greedy algorithm by B{\l}asiok, Buhai, Kothari, and Steurer
    \cite{BBK24} finds $S$ by sampling inner products of the columns of the
    adjacency matrix of $G$, and checking if they deviate significantly from
    typical inner products of random vectors.  Their analysis uses a suitably
    random matrix that, with high probability, satisfies a certain restricted
    isometry property.  Inspired by Wootters's work on list decoding, we put
    forth and implement the $1$-norm analog of this argument, and
    quantitatively improve their analysis to work all the way up to the
    conjectured optimal $\sqrt{n \log n}$ bound on clique size, answering one
    of the main open questions posed in \cite{BBK24}.
\end{abstract}

\section{Introduction}

    An important and well-studied class of computational problems concerns
    finding an atypical structure amidst ambient noise.  One such
    quintessential challenge is the fundamental \emph{planted clique} problem
    from combinatorial optimization, where we are given a graph $G$ on $n$
    vertices that contains a clique of size $k$ (structure), while the
    remaining edges are randomly placed (noise).  See
    Figure~\ref{fig:question} for an example.

    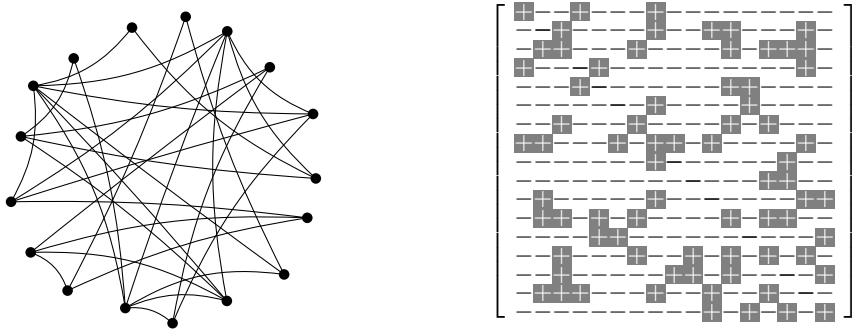
\begin{figure}
        \centering
        \def\n{17}
        \def\p{0.2}
        \begin{tikzpicture}
            \pgfmathsetseed{0988365024}
            \foreach \v in {1, ..., \n}{
                \fill (360/\n*\v : 2) + (360*rnd : 0.1)
                    circle (2pt) coordinate (v\v);
            }
            \foreach \u in {1, ..., \n}{
                \edef\uu{\the\numexpr \u + 1}
                \foreach \v in {\uu, ..., \n}{
                    \pgfmathtruncatemacro\b{rnd < \p}
                    \ifnum \b = 1
                        \draw (v\u) to [bend left=(\u+\n/2-\v+rand)*3] (v\v);
                    \fi
                }
            }
            
            \foreach \u in {3, 7, 12, 14}{
                \foreach \v in {3, 7, 12, 14}{
                    \ifnum \u < \v
                        \draw (v\u) to [bend left=(\u+\n/2-\v+rand)*3] (v\v);
                    \fi
                }
            }
        \end{tikzpicture}
        \hfil
        \begin{tikzpicture} [yscale=-1]
            \small
            \tikzset{
                0/.style={node contents={$-$}},
                1/.style={
                    node contents={$+$},
                    fill=gray, text=white, inner sep=0pt, inner ysep=0.5pt
                },
            }
            \pgfmathsetseed{0988365024}
            \foreach \u in {1, ..., \n}{
                \foreach \v in {\u, ..., \n}{
                    \pgfmathtruncatemacro\b{rnd < \p}
                    \draw [scale=0.25]
                        (\u, \v) node [\b] {} (\v, \u) node [\b] {};
                }
            }
            \foreach \u in {3, 7, 12, 14}{
                \foreach \v in {3, 7, 12, 14}{
                    \draw [scale=0.25] (\u, \v) node [1] {};
                }
            }
            \draw [scale=0.25] (\n/2, \n/2) + (0.5, 0.4) node
                {$\left[ \rule{4.5cm}{0cm} \rule{0cm}{2.2cm} \right]$};
        \end{tikzpicture}
        \caption{
            An instance of planted clique problem.  Left is the graph
            and right is the adjacency matrix.  Answer is in
            Figure~\ref{fig:answer}.
        }                                                \label{fig:question}
    \end{figure}

    A situation in which the clique is obvious is when $k \gg \sqrt{n \log
    n}$.  In this situation, since the degree of every vertex in a uniformly
    random graph is in the range $n/2 \pm \OOO(\sqrt{n \log n})$ with high
    probability, one can find the planted $k$-clique by simply picking out
    the $k$ vertices with the highest degrees.  A classic spectral algorithm
    improves this bound and succeeds in finding the planted clique when $k
    \ge \Omega(\sqrt{n})$ \cite{AKS98}.

    However, if $k$ is much smaller than $\sqrt n$ but still\footnote{ A
    random graph will have cliques of size $2 \log_2 n$ with high probability
    \cite[Section~4.5]{AlS16}, so we must assume greater $k$ for the planted
    clique to ``stand out''.} much greater than $\log n$, these simple
    algorithms cease to work.  In fact, no efficient algorithm is known in
    this regime, and the well-known planted clique hypothesis asserts that no
    such algorithm exists. In other words, there seems to be a \emph{
    statistical vs.\ computational gap} of $\log n$ vs.\ $\sqrt n$, in which
    range certain tasks are information-theoretically possible but
    (seemingly) computationally intractable.  The study of this phenomenon,
    and which computational problems with statistical inputs exhibit such
    gaps, is a central theme in average-case complexity that has received
    accelerated attention in recent years.

    In addition to the large $k$ needed for the degree-based algorithm to
    work, the algorithm is also fragile and easy to trick.  If we add $k$ new
    edges to a non-clique vertex, the top $k$ high-degree vertices will
    contain a false positive.  The algorithm is thus confounded by even
    modest fluctuations from a fully random base graph. To study the
    robustness of clique-finding algorithms, Feige and Kilian introduced the
    insightful and influential \emph{semirandom} model \cite{FeK01}.  In one
    version of this model, $S$ is the vertices of the planted clique, edges
    between $S$ and $\VS$ (where $V$ is the set of vertices) are placed
    randomly, and the edges within $\VS$ can be configured adversarially with
    access to the random edges.

    Unlike the random model, for which simple solutions exist for $k \ge
    \Omega(\sqrt{n})$, in the semirandom model it is not obvious whether we
    can identify a clique as large as, say, $n^{2/3}$. In fact, the goal
    itself has to be relaxed from finding the specific planted clique $S$,
    which is impossible because the adversary can add up to $(n-k)/k$ cliques
    of size $k$ in $G[\VS]$.  One can, instead, ask the algorithm to find a
    list of at most $(1 + o(1)) n/k$ cliques that includes the planted clique
    $S$.  Steinhardt \cite{Ste18} showed that when $k \ll \sqrt{n}$, this
    task of finding a list of $(1 + o(1)) n/k$ cliques including the correct
    clique is impossible, \emph{even information-theoretically}. This shows a
    sharp contrast between the random vs.\ semirandom planted clique problem,
    as the random case is statistically solvable for $k$ as small as
    $\OOO(\log n)$.

    \begin{figure}
        \centering
        \def\n{17}
        \def\p{0.2}
        \begin{tikzpicture}
            \pgfmathsetseed{0988365024}
            \foreach \v in {1, ..., \n}{
                \fill (360/\n*\v : 2) + (360*rnd : 0.1)
                    circle (2pt) coordinate (v\v);
            }
            \foreach \u in {1, ..., \n}{
                \edef\uu{\the\numexpr \u + 1}
                \foreach \v in {\uu, ..., \n}{
                    \pgfmathtruncatemacro\b{rnd < \p}
                    \ifnum \b = 1
                        \draw (v\u) to [bend left=(\u+\n/2-\v+rand)*3] (v\v);
                    \fi
                }
            }
            \foreach \u in {3, 7, 12, 14}{
                \foreach \v in {3, 7, 12, 14}{
                    \ifnum \u < \v
                        \draw [red, line width=0.8pt, line cap=round]
                            (v\u) to [bend left=(\u+\n/2-\v+rand)*3] (v\v);
                    \fi
                }
            }
        \end{tikzpicture}
        \hfil
        \begin{tikzpicture} [yscale=-1]
            \small
            \tikzset{
                0/.style={node contents={$-$}},
                1/.style={
                    node contents={$+$},
                    fill=gray, text=white, inner sep=0pt, inner ysep=0.5pt
                },
            }
            \pgfmathsetseed{0988365024}
            \foreach \u in {1, ..., \n}{
                \foreach \v in {\u, ..., \n}{
                    \pgfmathtruncatemacro\b{rnd < \p}
                    \draw [scale=0.25]
                        (\u, \v) node [\b] {} (\v, \u) node [\b] {};
                }
            }
            \foreach \u in {3, 7, 12, 14}{
                \foreach \v in {3, 7, 12, 14}{
                    \draw [scale=0.25] (\u, \v) node [1, fill=red] {};
                }
            }
            \draw [scale=0.25] (\n/2, \n/2) + (0.5, 0.4) node
                {$\left[ \rule{4.5cm}{0cm} \rule{0cm}{2.2cm} \right]$};
        \end{tikzpicture}
        \caption{
            The planted clique is highlighted.
        }                                                  \label{fig:answer}
    \end{figure}
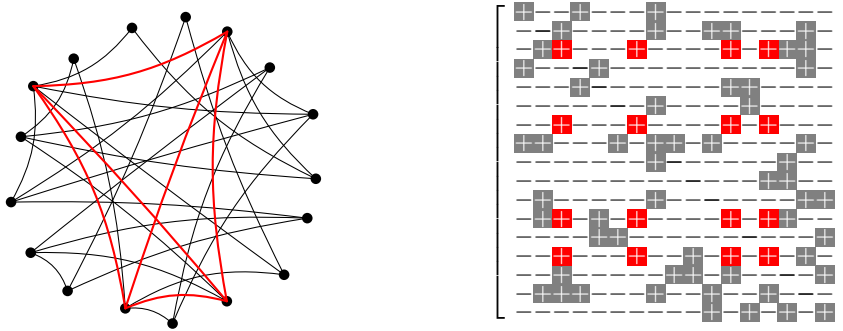

    This raises the natural challenge of whether we can solve semirandom
    planted clique for $k$ approaching the Steinhardt threshold of
    $\sqrt{n}$.  This question has received a fair bit of attention in recent
    years. Algorithms based on (different) semidefinite programming (SDP)
    relaxations were given to solve this problem when $k \gtrsim n^{2/3}
    \log^{1/3} n$ in \cite{CSV17}, and when $k \gtrsim n^{2/3}$ in
    \cite{MMT20}. The latter work also suggested that their SDP should likely
    fail when $k \ll n^{2/3}$. More recently, Buhai, Kothari, and Steurer
    \cite{BKS23} gave a complicated algorithm based on sum-of-squares SDP to
    solve the problem all the way to $k \ge n^{1/2+\varepsilon}$; the runtime
    of the algorithm is $n^{\OOO(1/\varepsilon)}$ and explodes as $k$
    approached Steinhardt's threshold of $n^{1/2}$.

    Finally, B{\l}asiok, Buhai, Kothari, and Steurer \cite{BBK24} gave a
    delightfully elegant greedy algorithm for the problem that works as long
    as $k \gtrsim \sqrt{n} (\log n)^2$. It was also shown in these works
    \cite{BKS23,BBK24} that when $k \gtrsim \sqrt{n \log n}$, one can prune
    any polynomial-sized list containing the planted set $S$ to a list of
    size $(1+o(1)) n/k$. 
    
    Given this backdrop, a natural question is whether one can solve
    semirandom planted clique, i.e., output a polynomial-sized list including
    the planted set $S$, for $k$ all the way down to $\OOO(\sqrt{n \log n})$,
    which is believed to be the right threshold.
    
    This is precisely what we achieve in this short paper.

    \begin{theorem} [main]                                   \label{thm:main}
        The semi-random planted clique problem on $n$ vertices, where the
        planted clique has size $k \ge \OOO(\sqrt{n \log n})$ and random
        edges have density $1/2$, has a $\poly(n)$-time algorithm that
        outputs a list of $(1 + o(1)) n/k$ guesses of the clique, and is
        correct with probability $0.99$.
    \end{theorem}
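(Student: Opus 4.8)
The plan is to follow the greedy strategy of B{\l}asiok, Buhai, Kothari, and Steurer \cite{BBK24} and to replace the $2$-norm restricted isometry estimate at the heart of their analysis with a sharper $1$-norm version, in the spirit of Wootters's work on list decoding. Recall that their algorithm operates on the $\pm 1$ adjacency matrix of $G$ and decides membership in the planted clique by testing whether inner products of columns deviate from the value typical for random vectors: two columns indexed by clique vertices carry a $\Theta(k)$ signal on their shared clique coordinates, whereas every other pair contributes only a mean-zero fluctuation of order $\sqrt{k}$ there. The one lever the adversary has, since he controls $G[\VS]$ but \emph{not} the random cut edges between $S$ and $\VS$, is to line up those fluctuations across many non-clique vertices; the restricted isometry property is precisely the device that caps how much such a conspiracy can achieve. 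Writing $R \in \{\pm 1\}^{S \times \VS}$ for the centered random cut matrix, with $|S| = k$ rows, what is needed is two-sided control of $\|R x\|$ over all $s$-sparse $x \in \RR^{\VS}$ with $s = \OOO(n/k)$. B{\l}asiok et al.\ use the $2$-norm form $\|Rx\|_2^2 = (1 \pm \delta)\,k\,\|x\|_2^2$, which costs $k \gtrsim s\log(n/s)/\delta^2$ rows; because their combinatorial argument needs $\delta = o(1)$, the $1/\delta^2$ factor is what pins their result at $k \gtrsim \sqrt n\,(\log n)^2$.

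Second, I would establish and use the $1$-norm analog: with probability $1 - o(1)$, every $s$-sparse $x \in \RR^{\VS}$ satisfies
\[
  c_1\,k\,\|x\|_2 \ \le\ \|R x\|_1 \ \le\ c_2\,k\,\|x\|_2
\]
for absolute constants $0 < c_1 \le c_2$ (the ``typical'' value being $\sqrt{2/\pi}\,k\,\|x\|_2$), and --- crucially --- this already holds when $k \ge C\sqrt{n \log n}$. The proof is a union bound over the $\binom{n}{s}$ supports together with a net argument inside each support. On a fixed $s$-element support, $\|Rx\|_1 = \sum_{i=1}^{k} |\langle R_i, x\rangle|$ is a sum of $k$ i.i.d.\ bounded terms and a $1$-Lipschitz function of the underlying signs, so bounded-differences concentration pins it to within relative error $\varepsilon$ of its mean with failure probability $\exp(-\Omega(k\varepsilon^2))$; taking $\varepsilon$ a small constant, this beats the $\exp(\OOO(s\log n))$ cost of the union bound and net exactly when $k \gtrsim s\log n$, i.e.\ (since $s = \OOO(n/k)$) when $k^2 \gtrsim n\log n$. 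Preventing the net from leaking a spurious $\log$ factor is done, as in Wootters's argument, by splitting $x$ into its ``peaky'' and ``flat'' parts by coordinate magnitude and estimating the two regimes separately; this is also where one sees why a \emph{constant-factor} $\ell_1$ isometry is lossless where its $\ell_2$ cousin is not, since it carries no $1/\delta^2$ penalty.

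Third, I would re-run the progress accounting of the greedy algorithm with the adversary's ``budget'' now measured in $\ell_1$ rather than $\ell_2$: at each step the $1$-norm ISP bounds the aggregate over-correlation of non-clique vertices with the clique built so far, hence the number of vertices the algorithm can be misled by, and redoing the BBK24 bookkeeping with these bounds shows that the procedure --- run over all $n$ choices of initial vertex, with a small amount of internal randomness --- outputs a $\poly(n)$-size list containing the planted set $S$ with probability $1 - o(1)$, as long as $k \ge C\sqrt{n\log n}$. Finally I would invoke the pruning result quoted in the excerpt (from \cite{BKS23,BBK24}): when $k \gtrsim \sqrt{n\log n}$, any polynomial-size list containing $S$ can be refined to one of size $(1+o(1))\,n/k$ still containing $S$. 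Composing these and noting $o(1) < 0.01$ for all large $n$ (small $n$ being handled by brute force) gives the claimed $0.99$ success probability.

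I expect the main obstacle to be the $1$-norm isometry estimate itself, precisely because matching the sparsity $s \asymp n/k$ against the number of rows $k$ leaves essentially \emph{no slack} once $k$ is as small as $\sqrt{n\log n}$: the quantity $k\varepsilon^2$ must still dominate $s\log n \asymp (n/k)\log n$, so the concentration inequality and the net/peeling argument have to be carried out with sharp constants rather than comfortable big-$O$'s, and one must ensure that the ``peaky'' part of a sparse vector --- on which $\ell_1$ concentration is weakest --- does not spoil the lower bound. Threading the resulting constant-factor estimate through the greedy analysis, and the appeal to the known pruning lemma, should by comparison be routine.
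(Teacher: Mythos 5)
You correctly identify the key idea---replace the $\ell^2$-RIP in the BBK24 analysis with an $\ell^1$ isometry estimate, in the spirit of Wootters---but the proposal misidentifies the algorithm it needs to be attached to, and this is a fatal gap. Your description (``two columns indexed by clique vertices carry a $\Theta(k)$ signal on their shared clique coordinates'', and later ``run over all $n$ choices of initial vertex'') is the \emph{single}-vertex version, which tests $\langle A^u, A^v\rangle$ for one sampled $u$. That version is not bottlenecked by the quality of the isometry estimate: even with a sharp $\ell^1$ bound giving at most $\OOO(n^2/k^2)$ bad $u\in S$ per fixed $v\in\VS$, double counting over only the $k$ pivots $u\in S$ leaves $\OOO(n^3/k^3)$ bad $v$'s for a typical $u$, which is $o(k)$ only when $k=\omega(n^{3/4})$. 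The paper's algorithm (same as BBK24's) samples a \emph{triple} $(u_1,u_2,u_3)\in S^3$, forms $T^{u_1u_2u_3}=A^{u_1}\odot A^{u_2}\odot A^{u_3}$, and tests $\langle T^{u_1u_2u_3},A^v\rangle$; the $k^3$-sized pivot space is what pulls the double-counting down to $\OOO(n^3/k^5)=o(k)$ at $k\approx\sqrt n$. As written, your proposal never uses this and therefore cannot reach the stated threshold.

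The random matrix and sparsity you write down are also off, and the paper's proof of the $\ell^1$ bound is substantially different from yours. The matrix to control is $T_{\VS}\in\{\pm1\}^{(n-k)\times k^3}$ (rows indexed by $\VS$, columns by triples in $S^3$), applied to the indicator $1_{\BB}$ of the bad set $\BB\subseteq S^3$, which has sparsity $|\BB|=\OOO(n^2/k^2)$---not a $k\times(n-k)$ matrix $R$ with $s=\OOO(n/k)$. And rather than a union bound over $\binom{n}{s}$ supports plus a net and a peaky/flat split, the paper splits $\|T_{\VS}1_{\BB}\|_1$ into mean plus deviation, bounds the mean by $n\sqrt{|\BB|}$ via Jensen and pairwise independence, symmetrizes the deviation into a Gaussian process $\EE\max_{\BB}\langle g,T_{\VS}1_{\BB}\rangle$, and then exploits the convexity fact that $\langle g,T_{\VS}1_{\BB}\rangle$ is a sum of $|\BB|$ of only $k^3$ distinct inner products $\langle g,T^{u_1u_2u_3}_{\VS}\rangle$, so the Gaussian-max union bound runs over $k^3$ seeds rather than over subsets of columns. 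That is exactly what sidesteps the logarithmic leakage you anticipate fighting with nets. (Also, you recall BBK24's $\ell^2$-RIP as needing distortion $\delta=o(1)$; in fact constant distortion suffices, and their logarithmic loss comes from the dependence among the columns of $T$, which are not i.i.d.\ but algorithmically generated from $A$.)
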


    In fact, our algorithm is the same as that of \cite{BBK24}.  We provide a
    simpler analysis that leads to the quantitatively stronger bound.  The
    starting point of the algorithm of \cite{BBK24} is the following base
    strategy.
    \begin{itemize} \itemsep=0.5ex
        \item Consider $A$ the $\pm 1$ adjacency matrix of $G$ (with $+1$
            denoting presence of an edge).  Let $A^v$ be the $v$th column of
            $A$ for $v \in V$.
        \item Randomly choose a vertex $u \in V$.
        \item Let $S_u \subset V$ be the set of vertices $v \in V$ such that
            the inner product between $A^v$ and $A^u$ is in the range $k \pm
            \OOO(\sqrt n)$.  This $S_u$ will be one guess of $S$.
        \item Repeat the two steps above $\Theta(n/k)$ times so that with
            high probability, we are lucky at least once and $u$ is in $S$.
            It remains to show that when $u$ is in $S$, the guess $S_u$
            differ from $S$ by at most $o(k)$ vertices.
            (The extra vertices $S_u \setminus S$ can easily be pruned out;
            the missing vertices $S \setminus S_u$ can easily be found.)
    \end{itemize}
    This strategy can handle the region $k \ge \tilde\OOO(n^{3/4})$
    \cite[Section~2.1]{BBK24}.
    
    As it turns out, the obstacle is that there are only $n$ choices of
    $u \in V$, so it is relatively easy\footnote{ To clarify, it is not
    proved that the adversary has a way to fool the algorithm when $k <
    \tilde\OOO(n^{3/4})$; it is just that the analysis given in
    \cite[Section~2.1]{BBK24} needs $k \ge \tilde\OOO(n^{3/4})$.} for
    the adversary to manipulate inner products.  B{\l}asiok \textit{et al.}
    strengthened this strategy in \cite{BBK24} by comparing more inner
    products.
    \begin{itemize} \itemsep=0.5ex
        \item Randomly choose \emph{three} vertices $u_1, u_2, u_3 \in V$.
        \item Let $S_{u_1u_2u_3} \subset V$ be the set of vertices $v
            \in V$ such that the inner product between $A^v$ and $A^{u_1}
            \odot A^{u_2} \odot A^{u_3}$ (where $\odot$ denotes entry-wise
            product of vectors)  ``looks right'' (this will be made precise
            later). This $S_{u_1u_2u_3}$ will be one guess for $S$.
        \item Repeat the two steps above $\Theta((n/k)^3)$ times so that
            with high probability, we are lucky at least once and $u_1$,
            $u_2$, $u_3$ all belong to $S$.  It remains to show that when
            $u_1$, $u_2$, $u_3$ are all in $S$, the guess $S_{u_1u_2u_3}$
            differ from $S$ by $o(k)$ vertices.
    \end{itemize}
    The improved algorithm works in the region $k \ge \OOO(\sqrt{n} (\log n
    )^2)$.  The improvement comes from the fact that it is harder for the
    adversary to manipulate inner products as triples $(u_1, u_2, u_3) \in
    V^3$ are sampled from a much larger sample space.\footnote{A natural
    question is what the bounds on $k$ are if we sample pairs $(u_1, u_2)$,
    quadruples $(u_1, u_2, u_3, u_4)$, and so on.  That is, why stop at
    triples?  A rule of thumb is one can work with $k > n^a$, where $a$ is
    $3/(3 + \text{the length of tuples})$.  But we also know that $\sqrt n$
    cannot be beaten.  Hence, stopping at triples is the best option.  See
    \cite[Section~2.2]{BBK24} for a discussion of sampling pairs.}

    The crux of the analysis of \cite{BBK24} is to control deviations of
    inner products via the restricted isometry property (RIP) of a certain
    random matrix. This RIP, which concerns $\ell^2$-norms, is hard to
    control. Our key insight is to realize that one can work with a simpler
    $\ell^1$ version of the RIP, and to control it with a simpler and
    quantitatively tighter analysis. This aspect is directly inspired by a
    similar use of $\ell^1$-RIP in the novel analysis of list-decoding
    properties of random linear codes by Wootters \cite{Woo13}. 

\paragraph{The paper is organized as follows.}

    In Section~\ref{sec:inner}, we review the inner-product algorithms
    in greater detail.
    In Section~\ref{sec:isometry}, we explain how their performance
    analysis is related to whether certain random matrices have RIP.
    In Section~\ref{sec:1norm}, we discuss list decoding and how proof
    techniques from there can be applied here.
    In Section~\ref{sec:wrap}, we wrap up the proof of the main theorem.

\section{Check Inner Products}                              \label{sec:inner}

    In this section, we review \cite{BBK24}'s greedy algorithms that are
    based on sampling inner products between the columns of the adjacency
    matrix of $G$.
    
    Since it is easier to talk about deviations when the expectation is zero,
    we assume that the adjacency matrix $A \in \{+1, -1\}^{n\times n}$
    records an edge as $+1$ and a non-edge as $-1$.  Let $u$ and $v$ be two
    vertices.  Consider the inner product between $A^u$ and $A^v$, the $u$th
    and the $v$th columns of $A$.  There are three cases.
    \begin{itemize} \itemsep=0.5ex
        \item If both $u$ and $v$ are in the clique, then the inner product
            looks like
            \[ k + (\text{sum of $n - k$ random signs}), \]
            which is about $k \pm \OOO(\sqrt{n-k})$ by the central limit
            theorem, or $k \pm \sqrt n$ for short.
        \item If $u$ is in the clique but $v$ is not, then the inner product
            looks like the sum of $n$ random signs, which is about $\pm \sqrt
            n$.
        \item If neither of them is in the clique, then the inner product is
            the sum of $n$ adversarial signs, which can be anything.
    \end{itemize}
    Accordingly, if $u$ happens to be in the clique, we can separate $v \in
    S$ from $v \notin S$ by checking if the inner product is $k \pm \sqrt n$
    or $\pm \sqrt n$.  We can, for instance, define $S_u \subset V$ to be
    those $v \in V$ such that the inner product is at least $k/2$.

    Now for the improved algorithm that samples three vertices $u_1, u_2, u_3
    \in V$ instead of one, we first compute the entry-wise product of
    $A^{u_1}$, $A^{u_2}$, and $A^{u_3}$.  That is, we let $T^{u_1u_2u_3} \in
    \{+1, -1\}^{n\times1}$ be the column vector whose $w$th row is the
    product $A^{u_1}_w \cdot A^{u_2}_w \cdot A^{u_3}_w \in \{+1, -1\}$.  Then
    we can compute the inner products of $T^{u_1u_2u_3}$ with $A^v$ for all
    vertices $v \in V$.

    Observe that, when $u_1, u_2, u_3$ are in the clique, the $w$th entry of
    $T^{u_1u_2u_3}$ is
    \begin{equation*}
        T^{u_1u_2u_3}_w =
        \begin{cases*}
            1 & if $w \in S$, \\
            \text{product of three random signs} & if $w \notin S$.
        \end{cases*}
    \end{equation*}
    So $T^{u_1u_2u_3}$ plays the same role as $A^u$ in the earlier algorithm
    in the sense that the inner product between $T^{u_1u_2u_3}$ and $A^v$
    would be $k \pm \sqrt n$ if $v \in S$, and $\sqrt n$ if $v \notin S$.
    This is what ``looks right'' means in the introduction.  We define
    $S_{u_1u_2u_3} \subset V$ to be those $v \in V$ such that the inner
    product is at least $k/2$.

    We claim that all we want now
    is that $S_u$ is very close to $S$ for a considerable fraction of $u \in
    S$.  And for the latter algorithm, we want that $S_{u_1u_2u_3}$ is very
    close to $S$ for a considerable fraction of triples $(u_1, u_2, u_3) \in
    S^3$.  We do not need all $u$ (and all triples) to be good because we can
    simply sample more times.  We do not need $S_u$ (and $S_{u_1u_2u_3}$) to
    be exactly $S$ because we can post-process a close enough $S_u$ (and
    $S_{u_1u_2u_3}$) to get $S$, as the lemma below justifies.

    \begin{lemma} [Claim~5.9 of \cite{BKS23}]                \label{lem:post}
        Fix a small constant $\gamma > 0$.  With probability at least $0.99$
        the following happens: For all $S' \subset V$ such that $|S \cap S'|
        \ge (1 - \gamma) k$ and $|S'| \le (1 + \gamma) k$, every $v \in S$
        is connected to $(1 - \gamma) k - 1$ or more vertices in $S'$.
        Meanwhile, every $v \notin S$ is connected to $(1/2 + 2\gamma) k +
        \OOO(\sqrt{k \log n})$ or fewer vertices in $S'$.
    \end{lemma}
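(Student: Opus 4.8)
The plan is to handle the two halves of the statement separately, since the claim for $v \in S$ is deterministic and only the claim for $v \notin S$ involves the randomness.

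For $v \in S$: because $G[S]$ is complete, $v$ is adjacent to every vertex of $S \setminus \{v\}$, hence in particular to every vertex of $(S \cap S') \setminus \{v\}$. This already exhibits at least $|S \cap S'| - 1 \ge (1-\gamma)k - 1$ neighbours of $v$ inside $S'$, with no appeal to randomness at all, so this part holds with probability $1$.

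For $v \notin S$, the key point is to \emph{not} union-bound over the exponentially many admissible sets $S'$, but instead to control a single scalar per vertex. For each $v \notin S$ let $d_v$ be the number of neighbours of $v$ that lie in $S$. Since the edges of $G[S; \VS]$ are random of density $1/2$ and the adversary only gets to choose edges inside $\VS$, the quantity $d_v$ is distributed exactly as $\mathrm{Binomial}(k,1/2)$, regardless of the adversary's strategy. A Hoeffding bound gives $\Pr[d_v > k/2 + t] \le \exp(-2t^2/k)$; choosing $t = \OOO(\sqrt{k\log n})$ with a sufficiently large constant and taking a union bound over the at most $n$ vertices $v \notin S$, we conclude that with probability at least $0.99$ every $v \notin S$ satisfies $d_v \le k/2 + \OOO(\sqrt{k\log n})$.

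Finally I would convert this into the advertised bound for an arbitrary admissible $S'$. Partition the neighbours of $v$ in $S'$ according to whether they lie in $S$: those in $S' \cap S$ number at most $d_v$, and those in $S' \setminus S$ number at most $|S' \setminus S| = |S'| - |S \cap S'| \le (1+\gamma)k - (1-\gamma)k = 2\gamma k$, where we have simply conceded all of these (adversarial) edges to $v$. Summing, $v$ has at most $d_v + 2\gamma k \le (1/2 + 2\gamma)k + \OOO(\sqrt{k\log n})$ neighbours in $S'$. There is no serious obstacle in this argument; the only two things to be careful about are that $d_v$ is a function of the random cut edges alone (so that the clean binomial tail is legitimate even against an adaptive adversary), and that it is precisely the two size constraints on $S'$ that cap the adversarial contribution at $2\gamma k$.
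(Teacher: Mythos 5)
Your proposal is correct and follows essentially the same argument as the paper's (very terse) proof: the $v \in S$ part is a deterministic clique-counting observation, and the $v \notin S$ part combines a binomial tail bound on the random neighbours in $S$ with the trivial $2\gamma k$ bound on adversarial neighbours in $S' \setminus S$. You have simply filled in the details the paper leaves implicit, in particular the splitting of $S'$ into $S' \cap S$ and $S' \setminus S$ that accounts for the $2\gamma k$ term and the remark that $d_v$ depends only on the random cut edges so the binomial tail is valid even against an adaptive adversary.
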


    \begin{proof}
        Every $v \in S \setminus S'$ connects to all $(1 - \gamma) k$
        vertices in $S \cap S' \subset S'$; Every $v \in S \cap S'$ connects
        to one less because it does not connect to itself; this proves the
        first statement.  For the second statement, note that every $v \notin
        S$ connects to $S$ via random edges, and $\OOO(\sqrt{k \log n})$ is
        the high-probability tail bound.
    \end{proof}

\section{The Reduction to Restricted Isometry Property}  \label{sec:isometry}

    In this section, we explain how \cite{BBK24} reduces the performance of
    the inner-product--based algorithm to whether a certain random matrix has
    a form of restricted isometry property (RIP) with high probability.  A
    matrix $M \in \mathbb{R}^{m\times n}$ is said to have the
    $(t,\varepsilon)$-RIP if for all $t$-sparse vectors $z \in \mathbb{R}^n$,
    $(1-\varepsilon) \|z\|_2 \le \|M z\|_2 \le (1+\varepsilon) \|z\|_2$.
    This property is useful to recover an unknown sparse ``signal" $z$ using
    only $m \ll n$ ``measurements" $M z$.

    The readers will see (a) the randomness of the ``measurement'' matrix $M$
    corresponds to the random edges between $S$ and $\VS$, (b) the $z$ in the
    RIP statement
    \begin{equation}
        \|Mz\|_2 \approx (1 \pm \varepsilon) \|z\|_2,          \label{eq:rip}
    \end{equation}
    corresponds to the set of triples $(u_1, u_2, u_3) \in S^3$ where
    $S_{u_1u_2u_3}$ misclassifies some vertex $v \notin S$, and (c) the
    sparsity condition on $x$ corresponds to the success rate of the
    inner-product--based algorithm.

\subsection{The Simple Version}

    In order to explain the reduction better, let us first focus on the
    simpler version (the one that samples $u$, not triples).  From the
    perspective of the adversary, there are two options that can fool the
    inner-product algorithm.
    \begin{itemize} \itemsep=0.5ex
        \item The adversary can manipulate edges such that, when $u$ is not
            in the clique, $S_u$ appears to be a clique different from $S$.
        \item The adversary can manipulate edges such that, when $u$ is in
            the clique, a $v \notin S$ appears to be in $S$.
    \end{itemize}
    The first case does not pose any threat as the algorithm is allowed to
    output a list of candidate cliques.  There can be many wrong cliques and
    we only need to make sure that the correct clique is in.
    
    For the second option to be a threat, the adversary should realize that
    it is not enough to fabricate a bad inner product between only one pair
    of vertices.  This is because if a guess $S_u$ intersects $S$ at $k \pm
    o(k)$ vertices, one can post-process $S_u$ to get to $S$ per se
    (Lemma~\ref{lem:post}).  As a result, the adversary should seek to
    manipulate the edges within $\VS$ to maximize the number of pairs $(u, v)
    \in S \times (\VS)$ such that $v$ appears to be in $S$ when we compute
    the inner product between $u$ and $v$.

    Now, suppose that we can show the following statement.
    \begin{quote}
        \color{red!80!black}
        $(\heartsuit)$ For any fixed $v \in \VS$, the number of $u \in S$
        leading to bad inner products, i.e., at least $k/2$, is less than
        $\OOO(n^2/k^2)$.
    \end{quote}
    Then by a double counting argument we can see that (a) there are
    $\OOO(n^3/k^2)$ bad pairs $(u, v) \in S \times (\VS)$ and (b) for more
    than 50\% of $u \in S$, the number of bad $v \in \VS$ is less than
    $\OOO(n^3/k^3)$.  Equate $\OOO(n^3/k^3)$ to $o(k)$, we infer that the
    algorithm should work when $k > \omega(n^{3/4})$.

    It remains to reduce $(\heartsuit)$ to an RIP inequality of the form
    \eqref{eq:rip}.  For a fixed $v \in \VS$, let $B \subset S$ be the
    collection of bad $u$'s, that is, $\langle A^u, A^v \rangle \ge k/2$ and
    we want to show that $|B| \le \OOO(n^2/k^2)$.  Summing the inner
    products over $B$, we obtain
    \[  
        |B| \frac k2
        \le \sum_{u \in B} \langle A^u, A^v \rangle
        = \Bigl\langle \sum_{u \in B} A^u, A^v \Bigr\rangle
        = \Bigl\langle \sum_{u \in B} A^u_S, A^v_S \Bigr\rangle
        + \Bigl\langle \sum_{u \in B} A^u_\VS, A^v_\VS \Bigr\rangle.
    \]
    Here, $A^u_S$ is the sub-vector of $A^u$ where we keep the rows
    corresponding to vertices in $S$.  The same goes for $A^u_\VS$, $A^v_S$,
    and $A^v_\VS$.
    
    Note that $\bigl\langle \sum_{u \in B} A^u_S, A^v_S \bigr\rangle$ is the
    boring part as the randomness in $A^v_S$ makes the expectation zero.  The
    other term, the interesting part, can be bounded with the Cauchy--Schwarz
    inequality
    \[
        \Bigl\langle \sum_{u \in B} A^u_\VS, A^v_\VS \Bigr\rangle
        \le \Bigl\| \sum_{u \in B} A^u_\VS \Bigr\|_2
        \cdot \bigl\| A^v_\VS \bigr\|_2
    \]
    All entries are $\pm1$, so $\|A^v_\VS\|_2$ is just $\sqrt{n - k}$.  The
    other $2$-norm can be understood as
    \[
        \Bigl\| \sum_{u \in B} A^u_\VS \Bigr\|_2
        = \bigl\| A^S_\VS 1_B \bigr\|_2.
    \]
    $A^S_\VS \in \{+1, -1\}^{(n-k)\times k}$ is the sub-matrix of $A$ where
    we only keep the rows corresponding to $\VS$ and the columns
    corresponding to $S$; it plays the role of the measurement matrix $M$.
    On the other hand, $1_B$ is the indicator vector of $B \subset S$, it
    plays the role of the $x$ with sparsity $|B|$.  The argument so far is
    summarized by the following proposition.

    \begin{proposition} [Section~2.1 of \cite{BBK24}]
        Suppose that $A^S_\VS / \sqrt{n - k}$ is shown to have
        $(\OOO(n^2/k^2), \OOO(1))$-RIP.  Then $(\heartsuit)$ holds and the
        simple inner-product algorithm works for $k \ge \tilde\OOO(n^{3/4})$.
    \end{proposition}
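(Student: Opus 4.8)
The plan is to prove the proposition in two stages: first deduce the combinatorial bound $(\heartsuit)$ from the hypothesized $(\OOO(n^2/k^2),\OOO(1))$-RIP of $M := A^S_\VS/\sqrt{n-k}$, and then run the double-counting argument sketched above to certify that the simple inner-product algorithm outputs a short list containing $S$ once $k \ge \tilde\OOO(n^{3/4})$. For the first stage I would pick up exactly where the excerpt stops: fix $v \in \VS$, let $B \subset S$ collect the $u$ with $\langle A^u,A^v\rangle \ge k/2$, and split $\sum_{u\in B}\langle A^u,A^v\rangle$ into the $S$-part $\langle \sum_{u\in B}A^u_S, A^v_S\rangle$ and the $\VS$-part $\langle \sum_{u\in B}A^u_\VS, A^v_\VS\rangle$. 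The $S$-part equals $|B|\langle 1_S,A^v_S\rangle$ up to an $\OOO(|B|)$ correction from the diagonal of $A$; since $v\notin S$ the entries of $A^v_S$ are independent random signs, a Chernoff bound and a union bound over the $n$ choices of $v$ give $|\langle 1_S,A^v_S\rangle| = \OOO(\sqrt{k\log n})$ simultaneously, so the $S$-part is $\OOO(|B|\sqrt{k\log n})$, which is at most $|B|\,k/4$ as soon as $k\ge\Omega(\log n)$ (true in the regime of interest). The $\VS$-part is at most $\|A^S_\VS 1_B\|_2\cdot\|A^v_\VS\|_2 = \sqrt{n-k}\,\|M 1_B\|_2\cdot\sqrt{n-k}$ by Cauchy--Schwarz, and since $1_B$ is $|B|$-sparse the upper half of the RIP inequality (the lower half is not needed) gives $\|M 1_B\|_2 \le \OOO(1)\sqrt{|B|}$, so the $\VS$-part is $\OOO(n\sqrt{|B|})$. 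Since $\sum_{u\in B}\langle A^u,A^v\rangle \ge |B|\,k/2$ by the definition of $B$, while we have just bounded it by $|B|\,k/4 + \OOO(n\sqrt{|B|})$, we obtain $|B|\,k/4 \le \OOO(n\sqrt{|B|})$, hence $|B| \le \OOO(n^2/k^2)$, which is $(\heartsuit)$.

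The one genuinely non-routine point in this derivation is that it is circular as written: the RIP is hypothesized only for sparsity up to $t := C n^2/k^2$, so invoking it on $1_B$ already presumes the conclusion $|B| \le \OOO(n^2/k^2)$. I would remove the circularity by truncation. If $|B|$ exceeded $t$, choose any $B'\subset B$ with $|B'| = t$; every $u\in B'$ still has $\langle A^u,A^v\rangle\ge k/2$, so the identical chain of inequalities --- now legitimately applying the RIP to the honestly $t$-sparse vector $1_{B'}$ --- gives $t = |B'| \le C' n^2/k^2$ for an absolute constant $C'$, contradicting the choice $C > C'$. Hence $|B| < t = \OOO(n^2/k^2)$ unconditionally. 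Everything else in this proposition is a routine chain of Cauchy--Schwarz, tail bounds, and counting; the actual difficulty of the whole program --- showing that $A^S_\VS/\sqrt{n-k}$ does satisfy the RIP --- is not part of this statement.

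For the second stage, $(\heartsuit)$ caps the number of bad pairs $(u,v)\in S\times\VS$ at $(n-k)\cdot\OOO(n^2/k^2) = \OOO(n^3/k^2)$, so by Markov at least $k/2$ of the $u\in S$ have at most $\OOO(n^3/k^3)$ bad partners; fix any such good $u$. For every $v\in S$ the inner product $\langle A^u,A^v\rangle$ is $k$ plus a sum of $n-k$ random signs, hence $k\pm\OOO(\sqrt{n\log n}) > k/2$ with high probability (union over the $\le n$ pairs), using $k \gg \sqrt{n\log n}$; therefore $S\subseteq S_u$ while $|S_u\setminus S| \le \OOO(n^3/k^3)$, which is $o(k)$ precisely when $k = \omega(n^{3/4})$. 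Then $S' := S_u$ satisfies the hypotheses of Lemma~\ref{lem:post}, so keeping the vertices with many neighbours in $S_u$ recovers $S$ exactly. Finally, a uniformly random $u\in V$ lands in $S$ and is good with probability at least $k/(2n)$, so after $\Theta(n/k)$ independent trials (with a further constant factor to push the overall failure probability below $0.01$) at least one trial produces a good $u$ with high probability; post-processing each trial's $S_u$ via Lemma~\ref{lem:post} yields a list of $\OOO(n/k)$ candidate cliques that contains $S$, computed in $\poly(n)$ time and correct with probability $0.99$. The binding requirements are $\OOO(n^3/k^3) = o(k)$ together with the $\OOO(\sqrt{n\log n})$ concentration losses, which amount exactly to $k \ge \tilde\OOO(n^{3/4})$.
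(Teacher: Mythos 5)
Your proof is correct and follows the same decomposition the paper sketches around this proposition: sum $\langle A^u,A^v\rangle$ over $u\in B$, split rows into the $S$-part (zero-mean, concentrates) and the $\VS$-part, bound the latter by Cauchy--Schwarz and the hypothesized RIP to get $\|A^S_\VS 1_B\|_2 \le \OOO(\sqrt{n-k}\,\sqrt{|B|})$, extract $|B|\le\OOO(n^2/k^2)$, and then double-count to turn $(\heartsuit)$ into a per-$u$ guarantee that feeds into Lemma~\ref{lem:post}. The one thing you add that the paper leaves implicit is the truncation step that removes the apparent circularity in applying a sparsity-$t$ RIP to $1_B$ without first knowing $|B|\le t$ --- that is a genuine (if routine) care the sketch glosses over, and it is resolved correctly.
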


    However, one should not expect any meaningful bound from the RIP
    literature because, as much as $A^S_\VS$ is random, it is not a wide
    matrix but a tall one.  In \cite{BBK24}, $\|A^S_\VS 1_B\|_2$ is bounded
    using the fact that such a random matrix likely has an operator norm
    $\OOO(\sqrt n)$, so $\|A^S_\VS 1_B\|_2 \le \OOO(\sqrt{n} \sqrt{|B|})$.
    Overall we have $|B|k/2 \le \OOO(\sqrt{n} \sqrt{|B|} \sqrt{n})$, which
    leads to $|B| \le \OOO(n^2/k^2)$, as $(\heartsuit)$ needed.

\subsection{The Triple Version}

    RIP comes into play when we consider the triple version and the statement
    \begin{quote}
        \color{red!80!black}
        $(\diamondsuit)$ for any fixed $v \in \VS$, the number of $(u_1, u_2,
        u_3) \in S^3$ leading to bad inner products, i.e., at least $k/2$, is
        less than $\OOO(n^2/k^2)$.
    \end{quote}
    Then by a double counting argument we can see that (a) there are
    $\OOO(n^3/k^2)$ bad quadruples $(u_1, u_2, u_3, v) \in S^3 \times (\VS)$
    and (b) for more than 50\% of triples in $S^3$, the number of bad $v \in
    \VS$ is less than $\OOO(n^3/k^5)$.  Equate $\OOO(n^3/k^5)$ to $o(k)$, we
    infer that the algorithm should work when $k > \omega(n^{1/2})$.

    Fix a $v \in \VS$.  Let $\BB$ be the collection of bad triples $(u_1,
    u_2, u_3) \in S^3$ that misclassifies $v$.  Let $1_\BB \in \{0,
    1\}^{k^3\times1}$ be the indicator vector of $\BB$.  Let $T \in \{+1,
    -1\}^{n\times k^3}$ be the matrix whose $(u_1, u_2, u_3)$th column is
    $T^{u_1u_2u_3}$, the entry-wise product of $A^{u_1}$, $A^{u_2}$, and
    $A^{u_3}$.  To prove $(\diamondsuit)$, we follow the same logic as above:
    summing inner products, linearity, splitting rows, and Cauchy--Schwarz.
    \begin{align}
        |\BB| \cdot \frac k2
        & \le \sum_{{(u_1,u_2,u_3)\in \BB}}
            \Bigl\langle T^{u_1u_2u_3}, A^v \Bigr\rangle
        = \Bigl\langle \sum_{{(u_1,u_2,u_3)\in \BB}}
            T^{u_1u_2u_3}, A^v \Bigr\rangle                            \notag
        \\ & = \bigl\langle T 1_\BB, A^v \bigr\rangle
        = \bigl\langle T_\VS 1_\BB, A^v_\VS \bigr\rangle
            + \bigl\langle T_S 1_\BB, A^v_S \bigr\rangle               \notag
        \\ & \le \bigl\| T_\VS 1_\BB \bigr\|_2
            \cdot \bigl\| A^v_\VS \bigr\|_2
            + (\text{zero-mean boring part})                    \label{eq:cs}
        \\ & = \bigl\| T_\VS 1_\BB \bigr\|_2 \cdot \sqrt{n-k}
            + (\text{zero-mean boring part}),                          \notag
    \end{align}
    Now that $T_\VS$ is a wide random matrix, we can expect
    \[
        \bigl\| T_\VS 1_\BB \bigr\|_2
        \le (1 + 1) \sqrt{n - k} \bigl\| 1_\BB \bigr\|_2
    \]
    as $T_\VS / \sqrt{n - k}$ likely\footnote{ Note that the entries of $T$
    are not independent as they are algorithmically generated from a random
    source $A$.  That being said, all that is needed for the RIP to hold with
    high probability is that $\EE[T^\top T]$ is a multiple of the identity
    matrix, where the expectation is taking over the randomness of $A$.} has
    RIP with error term $\varepsilon \coloneqq 1$ and sparsity $|\BB| =
    \OOO(n^2/k^2)$.  This subsection is thus summarized by the following
    proposition.

    \begin{proposition} [also Section~2.2 of \cite{BBK24}]
        Assume that $T_\VS / \sqrt{n - k}$ is shown to have $(\OOO(n^2/k^2),
        1)$-RIP.  Then $(\diamondsuit)$ holds and the triple-sampling
        algorithm works for $k \ge \tilde\OOO(n^{1/2})$.
    \end{proposition}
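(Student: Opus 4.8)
The plan is to carry out the four-step reduction indicated in the text — deduce $(\diamondsuit)$ from the assumed RIP, convert $(\diamondsuit)$ into a per-triple guarantee by double counting, clean up a single good triple with Lemma~\ref{lem:post}, and amplify by repetition — making every estimate quantitative. Throughout I would treat the hypothesis that $T_\VS/\sqrt{n-k}$ has $(\OOO(n^2/k^2),1)$-RIP as a black box and combine it with a few routine tail bounds for sums of independent $\pm1$ cut edges.

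First, $(\diamondsuit)$. Fix $v\in\VS$ and let $\BB\subseteq S^3$ be the set of triples $\tau$ with $\langle T^\tau,A^v\rangle\ge k/2$. The displayed chain \eqref{eq:cs} gives $|\BB|\,k/2 \le \|T_\VS 1_\BB\|_2\sqrt{n-k} + \langle T_S 1_\BB, A^v_S\rangle$. The ``boring'' term is harmless because every column of $T_S$ is the all-ones vector on $S$ (the coordinates of $T^\tau$ indexed by $S$ are products of $+1$'s), so $\langle T_S 1_\BB, A^v_S\rangle = |\BB|\sum_{w\in S}A^v_w$; and since the cut edges incident to $v$ are i.i.d.\ signs, $\bigl|\sum_{w\in S}A^v_w\bigr| = \OOO(\sqrt{k\log n})$ holds simultaneously for all $v\in\VS$ with high probability, hence is at most $k/4$ once $k\gg\log n$. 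This yields $|\BB|\,k/4 \le \|T_\VS 1_\BB\|_2\sqrt{n-k}$. Were $1_\BB$ known to be $\OOO(n^2/k^2)$-sparse, the RIP hypothesis would give $\|T_\VS 1_\BB\|_2 \le 2\sqrt{n-k}\,\|1_\BB\|_2 = 2\sqrt{n-k}\sqrt{|\BB|}$, hence $|\BB|\le\OOO(n^2/k^2)$ — but that sparsity is exactly our goal. I would break the circularity in the standard way: if $|\BB|$ exceeded the target bound, pass to an arbitrary sub-collection $\BB'\subseteq\BB$ of exactly the target size (which stays inside the RIP's sparsity window provided its constant is chosen large enough), rerun the identical computation with $1_{\BB'}$ in place of $1_\BB$ — every triple of $\BB'$ is still bad — and conclude $|\BB'|\le\OOO(n^2/k^2)$ with a strictly smaller constant, a contradiction. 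This proves $(\diamondsuit)$.

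Second, double counting: summing the bound of $(\diamondsuit)$ over the $n-k$ choices of $v$ shows there are at most $\OOO(n^3/k^2)$ bad pairs $(\tau,v)\in S^3\times\VS$, so by Markov at least half of the $k^3$ triples $\tau\in S^3$ are \emph{good} in the sense that they misclassify at most $\OOO(n^3/k^5)$ vertices of $\VS$. For such a $\tau$ I would check that $S\subseteq S_\tau$ with high probability: for $v\in S$ one has $\langle T^\tau,A^v\rangle = k \pm \OOO(\sqrt{n\log n})$, the $\VS$-part being a sum of independent signs (this uniformly over all $\tau\in V^3$ and $v\in S$), which exceeds $k/2$ once $k\ge\OOO(\sqrt{n\log n})$. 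Hence $|S\cap S_\tau| = k$ and $|S_\tau|\le k + \OOO(n^3/k^5)$; imposing $\OOO(n^3/k^5)=o(k)$, i.e.\ $k\ge\omega(n^{1/2})$, places $S_\tau$ within the hypotheses of Lemma~\ref{lem:post}, after which classifying a vertex as being in $S$ iff it has at least $3k/4$ neighbours inside $S_\tau$ recovers $S$ exactly. The two requirements $k\ge\omega(n^{1/2})$ and $k\ge\OOO(\sqrt{n\log n})$ combine to the claimed range $k\ge\tilde\OOO(n^{1/2})$.

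Finally, the algorithm samples $\Theta((n/k)^3)$ triples from $V^3$ uniformly (with a further constant-factor repetition for the confidence $0.99$), post-processes each $S_\tau$ as above, and outputs the resulting $\poly(n)$-size list. A uniform triple lies in $S^3$ and is good with probability at least $\tfrac12(k/n)^3$, so with probability at least $0.99$ some trial uses a good triple from $S^3$ and its candidate equals $S$; a union bound over the $\OOO(n^4)$ structural high-probability events (the cut-edge tail bounds above and Lemma~\ref{lem:post}) keeps the overall success probability at $0.99$. The one step I expect to require genuine care is the bootstrapping in the second paragraph — extracting a bound on $|\BB|$ from a property that a priori only constrains sparse vectors, and choosing the RIP constant large enough that the truncated $\BB'$ still lies in its window; everything else reduces to concentration of sums of independent signs. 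The hard part of the paper, of course, is not this proposition but establishing the RIP hypothesis itself with tight parameters, which the $\ell^1$ analysis of the following sections is designed to do.
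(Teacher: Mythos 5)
Your proposal is correct and follows essentially the same route as the paper: the chain \eqref{eq:cs} plus the RIP hypothesis to get $(\diamondsuit)$, then double counting, Lemma~\ref{lem:post}, and repetition. The one place you go beyond the paper's informal treatment is the truncation argument breaking the circularity of applying RIP to $1_\BB$ when $|\BB|$ is the quantity being bounded; that is the standard and correct fix, and your sharper handling of the ``boring'' term via $T_S 1_\BB = |\BB|\cdot\mathbf{1}$ is also fine.
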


    It is not hard to see that the bound on $k$ is tight if the
    Cauchy--Schwarz inequality \eqref{eq:cs}
    \[
        \bigl\langle T_\VS 1_\BB, A^v_\VS \bigr\rangle
        \le \bigl\| T_\VS 1_\BB \bigr\|_2
        \cdot \bigl\| A^v_\VS \bigr\|_2
    \]
    assumes equality.  The equality is achieved when $T_\VS 1_\BB$ is
    parallel to $A^v_\VS$, a $\pm1$-vector.  But it is unlikely that all the
    entries of $T_\VS 1_\BB$ assume similar absolute values.  In the next
    section, we replace Cauchy--Schwarz with
    \begin{equation}
        \bigl\langle T_\VS 1_\BB, A^v_\VS \bigr\rangle
        \le \bigl\| T_\VS 1_\BB \bigr\|_1
        \cdot \bigl\| A^v_\VS \bigr\|_\infty.              \label{eq:1-infty}
    \end{equation}
    The equality is achieved if the adversary constructs $A^v_\VS$ by
    matching the signs of the entries of $T_\VS 1_\BB$.  This is always
    achievable.  So hopefully we can get a better bound on $k$.  That said,
    it is nontrivial to bound $\|T_\VS 1_\BB\|_1$ out of thin air.  We are
    lucky that a similar bound has been established in the list decoding
    literature, which is the topic of the next section.

\section{List Decodability and the 1-norm Analog of RIP}    \label{sec:1norm}

    In this section, we show how to bound the $1$-norm $\|T_\VS 1_\BB\|_1$
    in \eqref{eq:1-infty}.  The key is to show an inequality of the form
    \[
        \bigl\| T_\VS 1_\BB \bigr\|_1 \le
        \OOO(n \sqrt{|\BB|}) + (\text{deviation term}),
    \]
    where $\sqrt{|\BB|}$ reflects the norm of $1_\BB$ and the deviation term
    should be small.

    Before we proceed to bounding $\|T_\VS 1_\BB\|_1$, we want to briefly
    elaborate on the context of Wootters's work \cite{Woo13}.  Our proof of
    Theorem~\ref{thm:main} is almost a replica of hers because it is driven
    by the serendipitous discovery that \cite{Woo13} and \cite{BBK24} face
    similar challenges.

\subsection{List Decodability}

    In coding theory, there is a codebook $\CCC \subset \{+1, -1\}^n$ and we
    want that, whenever we observe a message $y \in \{+1, -1\}^n$, we should
    be able to decode the original message $u \in \CCC$ by minimizing the
    Hamming distance to the observation
    \[
        u \stackrel?=
        \operatorname*{argmin}_{x\in\CCC} \text{HDist}(x, y).
    \]
    But sometimes the $x$ that minimizes the distance is not the correct
    message.  So in the theory of list coding, we give attention to a list of
    candidates
    \[ \LLL \coloneqq \text{HBall}_r(y) \cap \CCC, \]
    where $\text{HBall}_r(y)$ is a Hamming ball centered at $y$ and $r$ is
    the \emph{search radius} (which is an estimate of how distorted $y$ can
    be).  We hope that the list size $|\LLL|$ is small so that, given other
    clues (such as side channels), we can recover the correct message from
    $\LLL$.
    In particular, we hope that $|\LLL| < |\CCC|$ for otherwise we are not
    simplifying the problem at all.
    This leads to the question of how to construct a codebook $\CCC$
    with specified alphabet size, block length, code rate, search radius, and
    list size.

    As it turns out, determining the region of these five parameters where
    list-decodable codes exist is a difficult problem.  One of the many
    difficulties is that the list size, $|\LLL|$ maximized over all $y$, is
    hard to control.  In \cite{Woo13}, Wootters made the following
    observation\footnote{ Here we focus on binary alphabet $\{+1, -1\}$,
    while \cite{Woo13} covers larger alphabets.}
    \begin{gather*}
        \text{two codewords $x$ and $y$ are $r$-close}
        \\ \Updownarrow
        \\ \text{inner product between $x$ and $y$ is $\ge n - 2r$}
    \end{gather*}
    and thus
    \begin{gather*}
        \text{there is a large list $\LLL$ of $x$'s
            that are $r$-close to $y$}
        \\ \Updownarrow
        \\ \text{inner product between every $x \in \LLL$ and $y$ is large}
        \\ \Downarrow
        \\ \langle \text{the sum of the $x$'s in $\LLL$}, y \rangle
            \text{ is large}
        \\ \Updownarrow
        \\ \text{$\|$the sum of the $x$'s in $\LLL\|_1$ is large}
        \\ \Updownarrow
        \\ \text{$\|\Phi \cdot 1_\LLL\|_1$ is large
            (so we need to bound $\|\Phi \cdot 1_\LLL\|_1$)}
    \end{gather*}
    where $\Phi \in \{+1, -1\}^{n\times|\CCC|}$ is a matrix whose columns are
    codewords of $\CCC$.
    
    One can see that both $\|T_\VS 1_\BB\|_1$ and $\|\Phi 1_\LLL\|_1$ are the
    $\ell^1$-norm of a matrix-vector product, where the matrix is random.
    Moreover, there is a deeper parallel: While $T_\VS$ is algorithmically
    generated from $A$, Wootters' matrix $\Phi$ is also algorithmically
    generated from the generator matrix of a random linear code $\CCC$.

\subsection{The $1$-norm Analog of RIP}

    Now that the readers can see the parallel between $T_\VS 1_\BB$ and $\Phi
    1_\LLL$, we move on to the actual bound.  To begin, we center the random
    variable $\|T_\VS 1_\BB\|_1$ by subtracting its expectation
    \begin{align}
        \max_\BB \Bigl\{ \bigl\|T_\VS 1_\BB \bigr\|_1 \Bigr\}
        & \le \max_\BB \Bigl\{\EE_A \bigl[
            \bigl\| T_\VS 1_\BB \bigr\|_1 \bigr] \Bigr\}     \label{eq:mean1}
        \\ & \mkern2mu + \max_\BB \Bigl\{
            \bigl\| T_\VS 1_\BB \bigr\|_1
            - \EE_A \bigl[ \bigl\| T_\VS 1_\BB \bigr\|_1 \bigr]
            \Bigr\}.                                          \label{eq:sym1}
    \end{align}
    It suffices to upper bound the two terms on the right-hand side.

    For the first term $\max_\BB \{ \EE_A[\|T_\VS 1_\BB\|_1] \}$, note that
    the first row of $T_\VS 1_\BB$ is a sum of $|\BB|$ random signs.  If the
    entries of $T_\VS$ are iid, the sum will be roughly $\pm \sqrt{|\BB|}$.
    But they are not; they are only pairwise independent.  Let $X_1, X_2,
    \dotsc, X_{|\BB|}$ be the first row of $T_\VS 1_\BB$, Jensen's inequality
    gives
    \[
        \EE_A [|X_1 + X_2 + \dotsb + X_{|\BB|}|]^2
        \le \EE_A [(X_1 + X_2 + \dotsb + X_{|\BB|})^2]
        = \EE_A [X_1^2 + X_2^2 + \dotsb + X_{|\BB|}^2] = |\BB|.
    \]
    So the same bound $\sqrt{|\BB|}$ applies.  There are $|\VS| = n - k$
    rows, so 
    \begin{equation}
        \EE_A \bigl[ \bigl\| T_\VS 1_\BB \bigr\|_1 \bigr]
        \le (n - k) \cdot \sqrt{|\BB|} \le n \sqrt{|\BB|}    \label{eq:mean2}
    \end{equation}
    gives an upper bound on \eqref{eq:mean1}.
    
    To bound \eqref{eq:sym1}, we want to apply Markov's inequality to obtain
    an upper bound that is valid 99\% of time.  Therefore, we take the
    expectation over $A$ and attempt to bound
    \[
        \EE_A \Bigl[
            \max_\BB \Bigl\{ \|t\|_1 - \EE_A [\|t\|_1] \Bigr\}
        \Bigr],
        \qquad t \coloneqq T_\VS 1_\BB.
    \]
    A standard symmetrization argument upper bounds this expectation with
    \begin{equation}
        \EE_A \Bigl[ \max_\BB \Bigl\{\|t\|_1
            - \EE_A [\|t\|_1] \Bigr\} \Bigr]
        \le \OOO\Bigl( \EE_A \EE_g \Bigl[ \max_\BB
            \{ \langle g, t \rangle \} \Bigr] \Bigr) \label{eq:sym2}
    \end{equation}
    where $g$ is a vector of $n - k$ iid standard Gaussian random variables.
    See \cite[(7)]{Woo13} for the corresponding part in Wootters's work.  See
    Appendix~\ref{app:sym} for details of the symmetrization argument.

    Now the right-hand side of \eqref{eq:sym2} involves the maximum of some
    linear combinations of Gaussian random variables.  Those are Gaussian by
    themselves, so the following bound applies:
    \begin{equation}
        \EE [\max\{\text{Gaussian rvs}\}]
        \le \text{constant} \cdot \sqrt{\text{variance}
        \cdot \log(\text{\# Gaussian rvs})}.                  \label{eq:sym3}
    \end{equation}
    This bound is a consequence of the union bound, which does not care if
    the Gaussian rvs are correlated.  Clearly the number of Gaussian rvs is
    the number of ways we can choose $\BB$, which is $\binom{n}{|\BB|}$.  But
    there is a better bound.
    
    When controlling highly correlated Gaussian rvs, the sheer number of
    variables is less important than the ``effective dimension'', i.e., the
    number of Gaussian ``seeds'' that generated everything else.  This can be
    seen by the following comparison: $\max_{j\in[m]} \{g_j\}^2$ grows like
    $\log m$ as $m$ increases; but $\max_{j\in[m]} \{\cos(j)g_1 +
    \sin(j)g_2\}^2$ converges to a fixed Chi-squared distribution.
    Therefore, when considering the $\max_\BB$ of $\langle g, t \rangle$, a
    crucial observation is that since $\langle g, t \rangle$ is the sum of
    $\langle g, T^{u_1u_2u_3}_\VS \rangle$ over $(u_1, u_2, u_3) \in \BB$, it
    is a convex combination of $|\BB| \langle g, T^{u_1u_2u_3}_\VS \rangle$.
    It remains to bound the maximum of those inner products.

    \begin{lemma}
        As $g$ varies, we have
        \begin{equation}   
            \max_\BB \langle g, t \rangle
            \le |\BB| \max_{(u_1,u_2,u_3)\in S^3}
            \bigl\langle g, T^{u_1u_2u_3}_\VS \bigr\rangle
            \le \OOO(|\BB| \sqrt{(n - k) \log(k^3)}),         \label{eq:sym4}
        \end{equation}
        where the first inequality always holds and the second is a
        high-probability bound.
    \end{lemma}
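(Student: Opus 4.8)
The plan is to establish the two inequalities of \eqref{eq:sym4} separately. The first one, $\max_\BB \langle g, t \rangle \le |\BB| \max_{(u_1,u_2,u_3)\in S^3} \langle g, T^{u_1u_2u_3}_\VS \rangle$, is purely deterministic and follows from the structure noted just before the lemma: for any fixed $g$ and any choice of $\BB$ of size $|\BB|$, we have $t = T_\VS 1_\BB = \sum_{(u_1,u_2,u_3)\in\BB} T^{u_1u_2u_3}_\VS$, so $\langle g, t\rangle = \sum_{(u_1,u_2,u_3)\in\BB} \langle g, T^{u_1u_2u_3}_\VS\rangle \le |\BB| \cdot \max_{(u_1,u_2,u_3)\in S^3} \langle g, T^{u_1u_2u_3}_\VS\rangle$, simply because a sum of $|\BB|$ numbers is at most $|\BB|$ times the largest of them (and the largest over the chosen $\BB$ is at most the largest over all of $S^3$). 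Maximizing the left side over $\BB$ preserves the bound. This disposes of the combinatorial maximum $\max_\BB$ and replaces it with a maximum over the $k^3$ columns of $T_\VS$, which is where the ``effective dimension $k^3$'' appears rather than $\binom{n}{|\BB|}$.

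For the second inequality I would apply the Gaussian maximal inequality \eqref{eq:sym3}. Fix the randomness of $A$ (hence of $T_\VS$) for the moment and treat $g$ as $n-k$ iid standard Gaussians. Each $\langle g, T^{u_1u_2u_3}_\VS\rangle$ is then a centered Gaussian whose variance is $\|T^{u_1u_2u_3}_\VS\|_2^2 = n - k$, since $T^{u_1u_2u_3}_\VS \in \{+1,-1\}^{n-k}$ is a $\pm 1$ vector regardless of $A$. There are exactly $k^3$ such random variables, one for each column index $(u_1,u_2,u_3) \in S^3$. Plugging variance $n-k$ and count $k^3$ into \eqref{eq:sym3} gives $\EE_g[\max_{(u_1,u_2,u_3)\in S^3} \langle g, T^{u_1u_2u_3}_\VS\rangle] \le O(\sqrt{(n-k)\log(k^3)})$; since this maximum is itself a Lipschitz (indeed $\sqrt{n-k}$-Lipschitz) function of $g$, Gaussian concentration upgrades the bound-in-expectation to a bound that holds with probability at least, say, $1 - n^{-10}$, which is the ``high-probability'' qualifier in the statement. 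Crucially this step does not need the columns of $T_\VS$ to be independent or even the entries of $T_\VS$ to be independent — the union bound underlying \eqref{eq:sym3} is indifferent to correlations, and all we used about $T_\VS$ is that every column is a unit-variance-per-coordinate $\pm1$ vector.

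The main thing to be careful about — though it is more bookkeeping than a genuine obstacle — is the quantifier order. We want a single high-probability event (over $A$, or over $g$, as appropriate to how \eqref{eq:sym2} feeds into the next section) on which the bound holds simultaneously for \emph{all} $\BB$; the first, deterministic inequality is exactly what lets us pull the $\max_\BB$ outside and reduce to a clean Gaussian maximum over the fixed index set $S^3$, so that only one application of concentration is needed. I would also note in passing that the factor $\log(k^3) = 3\log k$ is the quantitative heart of the improvement: it is this logarithm, rather than the $\log\binom{n}{|\BB|} \approx |\BB|\log n$ one would get from a naive union bound over all sparse supports, that ultimately yields the $\sqrt{n\log n}$ threshold once this lemma is combined with \eqref{eq:mean2}, \eqref{eq:sym2}, and \eqref{eq:1-infty} in Section~\ref{sec:wrap}.
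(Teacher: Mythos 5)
Your proof is correct and follows essentially the same route as the paper's: the first inequality because $\langle g, t\rangle$ is a sum of $|\BB|$ of the inner products $\langle g, T^{u_1u_2u_3}_\VS\rangle$, and the second by applying \eqref{eq:sym3} with variance $n-k$ and $k^3$ Gaussians. Your added justifications (each column of $T_\VS$ is a $\pm1$ vector so the variance is exactly $n-k$, and Lipschitz concentration to upgrade the expectation bound to a high-probability one) are correct and fill in details the paper leaves implicit.
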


    \begin{proof}
        The first inequality, as explained above, is because $t$ is the
        average of $|\BB| T^{u_1u_2u_3}_\VS$ over $(u_1, u_2, u_3) \in \BB$.
        The second inequality is \eqref{eq:sym3} with the variance being $n -
        k$ and the number of Gaussians being $k^3$.
    \end{proof}

    \begin{remark}
        It is temping to go one step further and bound the maximum of
        $\langle g, T^{u_1u_2u_3}_\VS \rangle$ using the fact that they are
        linear combinations of $g_1, g_2, \dotsc, g_{n-k}$.  That is,
        \[
            \max_{(u_1,u_2,u_3)\in S^3} \bigl\{
                \bigl\langle g, T^{u_1u_2u_3}_\VS \bigr\rangle \bigr\}
            \le (n - k) \max_{j\in[n-k]} \{g_j\}
            \le \OOO((n - k) \sqrt{\log(n - k)}).
        \]
        This, however, weakens the bound.  This is because $n - k$ used to be
        in the square root as the variance term, which is now a linear
        rescaling term.  Meanwhile, the number of Gaussians decreases from
        $k^3$ to $n-k$; this improves the bound slightly but not enough to
        compensate for $\sqrt{n - k}$.
    \end{remark}

    With everything above, we can upper bound \eqref{eq:1-infty}.
    \begin{align*}
        \bigl\langle T_\VS 1_\BB, A^v_\VS \bigr\rangle
        & \le \bigl\| T_\VS 1_\BB \bigr\|_1
            \cdot \bigl\| A^v_\VS \bigr\|_\infty
        = \bigl\| T_\VS 1_\BB \bigr\|_1
        = \eqref{eq:mean1} + \eqref{eq:sym1}
        \\ & \le \eqref{eq:mean2} +  \eqref{eq:sym4}
        = \OOO \bigl(n \sqrt{|\BB|} \bigr)
            + \OOO \bigl( |\BB| \sqrt{n \log n} \bigr).
    \end{align*}
    A reminder is that $\eqref{eq:sym1} \le \eqref{eq:sym4}$ is a
    high-probability bound.  We can now improve \eqref{eq:cs} 
    \[
        |\BB| \cdot \frac k2
        \le \bigl\langle T_\VS 1_\BB, A^v_\VS \bigr\rangle
            + \bigl\langle T_S 1_\BB, A^v_S \bigr\rangle
        = \OOO \bigl(n \sqrt{|\BB|} \bigr)
            + \OOO \bigl( |\BB| \sqrt{n \log n} \bigr)
            + \bigl\langle T_S 1_\BB, A^v_S \bigr\rangle
    \]
    Only one term $\bigl\langle T_S 1_\BB, A^v_S \bigr\rangle$ remains to be
    bounded.  Note that it has zero mean and, by Hoeffding's inequality,
    deviates away from zero by $\Omega(|\BB| \sqrt{n \log n})$ with
    probability at most $ 1/n^{10}$ \cite[Lemma~3.2]{BBK24}.  So it
    contributes $\OOO(|\BB| \sqrt{n \log n})$ most of the time.  When $k \ge
    c \sqrt{n \log n}$ for some large enough absolute constant $c$, we can
    conclude from the above that \[ k |\BB| \le \OOO(n \sqrt{|\BB|}) \] and
    thus $|\BB| \le \OOO(n^2/k^2)$, as desired in $(\diamondsuit)$.

\section{Putting Everything Together}                        \label{sec:wrap}

    In this section, we present a formal proof of Theorem~\ref{thm:main}.  We
    begin with the algorithm.
    \begin{itemize} \itemsep=0.5ex
        \item Sample $u_1, u_2, u_3 \in V$ randomly.
        \item Compute $T^{u_1u_2u_3}$ as the entry-wise product of $A^{u_1}$,
            $A^{u_2}$, and $A^{u_3}$.
        \item For each $v \in V$, compute the inner product $\langle
            T^{u_1u_2u_3}, A^v \rangle$.  Let $S_{u_1u_2u_3}$ be those $v \in
            V$ such that the inner product is at least $k/2$.
        \item Post-process $S_{u_1u_2u_3}$ by adding/keeping vertices that
            have many neighbors in $S_{u_1u_2u_3}$ (Lemma~\ref{lem:post}).
        \item Repeat the above steps $\OOO(n^3/k^3)$ times.
        \item Prune the $\OOO(n^3/k^3)$ guesses down to a list of $(1 + o(1))
            n/k$ guesses by removing pairs of cliques that overlap too much
            (\cite[Appendix~B]{BBK24}).
    \end{itemize}
    
   \noindent The validity of the algorithm follows from the argument below:

    \begin{itemize} \itemsep=1ex
        \item Thanks to Lemma~\ref{lem:post}, if any $S_{u_1u_2u_3}$ matches
            the clique $S$ with up to $o(k)$ errors, the post-processing will
            recover $S$ with high probability.
        \item Thanks to \cite[Appendix~B]{BBK24}, the intersection of a fake
            clique $S'$ and the true clique $S$ is usually $o(k)$, so we can
            remove largely-overlapping pair of cliques and will be left with
            a list of mostly-disjoint cliques.  The number of mostly-disjoint
            cliques is at most $(1 + o(1)) n/k$ with high probability.
        \item It remains to show that (a) with high probability, one of our
            choices of $u_1, u_2, u_3$ are all in $S$ and (b) when (a)
            happens, with high probability, $S_{u_1u_2u_3}$ matches $S$ with
            up to $o(k)$ errors.
        \begin{itemize}
            \item Statement (a) is obvious as the probability that $u_1, u_2,
                u_3$ are all in $S$ is $k^3/n^3$, and we try $\OOO(n^3/k^3)$
                times.
            \item Statement (b) can be rephrased as follows: for the majority
                of choices of $(u_1, u_2, u_3) \in S^3$, the number of bad $v
                \in \VS$ is $o(k)$.  By a double counting argument, (b) is a
                consequence of the diamond statement $(\diamondsuit)$, that
                for any fixed $v \in \VS$, the number of bad $(u_1, u_2, u_3)
                \in S^3$ (which is denoted $|\BB|$) is $\OOO(n^2/k^2)$.
            \item Finally, to prove $(\diamondsuit)$ $|\BB| \le
                \OOO(n^2/k^2)$, follow the last paragraph of
                Section~\ref{sec:1norm}.
        \end{itemize}
    \end{itemize}

    This finishes the proof of Theorem~\ref{thm:main} and concludes our
    paper.

\section*{Acknowledgments}

    Supported in part by NSF CCF-2211972 and a Simons Investigator award.

\appendix

\section{The Symmetrization Argument}                         \label{app:sym}

    In this appendix, we explain the symmetrization argument used to proved
    \eqref{eq:sym2}.  The idea is to treat the centering term as iid copies
    that take ``early expectation''.  We then delay the expectation so that
    the new copies are on the same footing as the original random variables.
    This allows us to bound them as twice the same thing.

    A useful trick called \emph{contraction principle} often follows the
    symmetrization argument and control iid sums of bounded random variables.
    The idea is that we do not know, or need, a bound better than Gaussian
    tail.  So we might as well treat them as Gaussians to begin with.

    Let $t$ be $T_\VS 1_\BB$, a random vector in $\RR^{n-k}$.  Recall that
    $T_\VS$ is algorithmically generated by $A$, and $A$ is the ultimate
    source of randomness by the problem description.  The full argument of
    \eqref{eq:sym2} goes as follows:
    \begin{align*}
        \kern2em & \kern-2em
        \EE_A \Bigl[ \max_\BB \Bigl\{
            \|t\|_1 - \EE_A [\|t\|_1] \Bigr\} \Bigr]
        = \EE_A \Bigl[
            \max_\BB \Bigl\{ \|t\|_1 - \EE_{A'} [|t'|_1] \Bigr\} \Bigr]
        = \EE_A \Bigl[
            \max_\BB \Bigl\{ \sum_j |t_j| - \EE_{A'} [|t'_j|] \Bigr\} \Bigr]
        \\ & \le \EE_A \EE_{A'} \Bigl[
            \max_\BB \Bigl\{ \sum_j |t_j| - |t'_j| \Bigr\} \Bigr]
        = \EE_A \EE_{A'} \EE_\varepsilon \Bigl[
            \max_\BB \Bigl\{ \sum_j\varepsilon_j(|t_j|-|t'_j|) \Bigr\} \Bigr]
        \\ & \le \EE_A \EE_{A'} \EE_\varepsilon \Bigl[
            \max_\BB \Bigl\{ \sum_j \varepsilon_j |t_j| \Bigr\} +
            \max_\BB \Bigl\{ \sum_j \varepsilon_j |t'_j| \Bigr\} \Bigr]
        = 2 \EE_A \EE_\varepsilon \Bigl[
            \max_\BB \Bigl\{ \sum_j \varepsilon_j t_j \Bigr\} \Bigr]
        \\ & = \sqrt{2\pi} \EE_A \EE_\varepsilon \Bigl[
            \max_\BB \Bigl\{ \sum_j \EE_g [|g_j|]
            \varepsilon_j t_j \Bigr\} \Bigr]
        \le \sqrt{2\pi} \EE_A \EE_\varepsilon \EE_g \Bigl[
            \max_\BB \Bigl\{ \sum_j |g_j| \varepsilon_j t_j \Bigr\} \Bigr]
        \\ & = \sqrt{2\pi} \EE_A \EE_g \Bigl[
            \max_\BB \Bigl\{ \sum_j g_j t_j \Bigr\} \Bigr]
         = \sqrt{2\pi} \EE_A \EE_g \Bigl[
            \max_\BB \{ \langle g, t \rangle \} \Bigr]
    \end{align*}
    Here, $A'$ and $t'$ are independent copies of $A$ and $t$, and
    $\varepsilon_1, \varepsilon_2, \dotsc, \varepsilon_{n-k}$ are iid
    Rademacher random variables.  The first inequality is by pulling
    $\EE_{A'}$ out of a convex function.  Now that they are on the same
    footing, we can add random signs $\varepsilon$ and change nothing.  The
    second inequality is by the triangle inequality of $\max_\BB$ as a norm.
    This make two copies of $\sum_j \varepsilon_j |t_j|$ so we no longer need
    $t'$.  The third inequality is by pulling $\EE_g$ out of a convex
    function.  This helps us introduce $g$ out of thin air.  As commented
    above, introducing $g$ does not weaken the bound because a sum like
    $\sum_j \varepsilon_j t_j$ is already quite Gaussian; so doing this only
    makes the picture clearer.

\bibliographystyle{alphaurl}
\bibliography{RipClique-42}

\begin{thebibliography}{BBKS24}

\bibitem[AKS98]{AKS98}
Noga Alon, Michael Krivelevich, and Benny Sudakov.
\newblock Finding a large hidden clique in a random graph.
\newblock In {\em Proceedings of the Ninth Annual {{ACM-SIAM}} Symposium on {{Discrete}} Algorithms}, {{SODA}} '98, pages 594--598, USA, January 1998. {Society for Industrial and Applied Mathematics}.

\bibitem[AS16]{AlS16}
Noga Alon and Joel~H Spencer.
\newblock {\em The probabilistic method}.
\newblock John Wiley \& Sons, 2016.

\bibitem[BBKS24]{BBK24}
Jaros{\l}aw B{\l}asiok, Rares-Darius Buhai, Pravesh~K. Kothari, and David Steurer.
\newblock Semirandom {{Planted Clique}} and the {{Restricted Isometry Property}}.
\newblock In {\em Proceedings of the IEEE Symposium on Foundations of Computer Science}, 2024.
\newblock URL: \url{http://arxiv.org/abs/2404.14159}.

\bibitem[BKS23]{BKS23}
Rares-Darius Buhai, Pravesh~K. Kothari, and David Steurer.
\newblock Algorithms {{Approaching}} the {{Threshold}} for {{Semi-random Planted Clique}}.
\newblock In {\em Proceedings of the 55th {{Annual ACM Symposium}} on {{Theory}} of {{Computing}}}, {{STOC}} 2023, pages 1918--1926, New York, NY, USA, June 2023. Association for Computing Machinery.
\newblock \href {https://doi.org/10.1145/3564246.3585184} {\path{doi:10.1145/3564246.3585184}}.

\bibitem[CSV17]{CSV17}
Moses Charikar, Jacob Steinhardt, and Gregory Valiant.
\newblock Learning from untrusted data.
\newblock In {\em Proceedings of the 49th Annual {ACM} {SIGACT} Symposium on Theory of Computing}, pages 47--60. {ACM}, 2017.
\newblock \href {https://doi.org/10.1145/3055399.3055491} {\path{doi:10.1145/3055399.3055491}}.

\bibitem[FK01]{FeK01}
Uriel Feige and Joe Kilian.
\newblock Heuristics for {{Semirandom Graph Problems}}.
\newblock {\em Journal of Computer and System Sciences}, 63(4):639--671, December 2001.
\newblock \href {https://doi.org/10.1006/jcss.2001.1773} {\path{doi:10.1006/jcss.2001.1773}}.

\bibitem[MMT20]{MMT20}
Theo McKenzie, Hermish Mehta, and Luca Trevisan.
\newblock A new algorithm for the robust semi-random independent set problem.
\newblock In {\em Proceedings of the 2020 {ACM-SIAM} Symposium on Discrete Algorithms}, pages 738--746. {SIAM}, 2020.
\newblock \href {https://doi.org/10.1137/1.9781611975994.45} {\path{doi:10.1137/1.9781611975994.45}}.

\bibitem[Ste18]{Ste18}
Jacob Steinhardt.
\newblock Does robustness imply tractability? {{A}} lower bound for planted clique in the semi-random model, September 2018.
\newblock \href {https://arxiv.org/abs/1704.05120} {\path{arXiv:1704.05120}}.

\bibitem[Woo13]{Woo13}
Mary Wootters.
\newblock On the list decodability of random linear codes with large error rates.
\newblock In {\em Proceedings of the ACM Symposium on Theory of Computing}, pages 853--860. {ACM}, 2013.
\newblock \href {https://doi.org/10.1145/2488608.2488716} {\path{doi:10.1145/2488608.2488716}}.

\end{thebibliography}

\end{document}